\declaretheorem{theorem}
\pgfplotsset{compat=1.16}
\DeclarePairedDelimiter\intervaloc{\lparen}{\rbrack}
\begin{document}

\title{Measuring Incompatible Observables with Quantum Neural Networks}

\author{Muchun Yang}
\affiliation{Institute of Physics, Beijing National Laboratory for Condensed
  Matter Physics,\\Chinese Academy of Sciences, Beijing 100190, China}
\affiliation{School of Physical Sciences, University of Chinese Academy of
  Sciences, Beijing 100049, China}

\author{Yibin Huang}
\affiliation{Institute of Physics, Beijing National Laboratory for Condensed
  Matter Physics,\\Chinese Academy of Sciences, Beijing 100190, China}
\affiliation{School of Physical Sciences, University of Chinese Academy of
  Sciences, Beijing 100049, China}
  
\author{D. L. Zhou} \email[]{zhoudl72@iphy.ac.cn}
\affiliation{Institute of Physics, Beijing National Laboratory for Condensed
  Matter Physics,\\Chinese Academy of Sciences, Beijing 100190, China}
\affiliation{School of Physical Sciences, University of Chinese Academy of
  Sciences, Beijing 100049, China}

\date{\today}

\begin{abstract}
  The Heisenberg uncertainty principle imposes a fundamental restriction in quantum mechanics, stipulating that measuring one observable completely erases the information on its conjugate one, thereby preventing simultaneous measurements of incompatible observables. Quantum neural networks (QNNs) is one of the most significant applications on near-term devices in noisy intermediate-scale quantum era. Here, we demonstrate that by implementing a multiple-output QNN that emulates a unital quantum channel, one can measure the expectation values of many incompatible observables simultaneously by Pauli-$Z$ measurements on distinct output qubits. We prove the existence of such quantum channel, derive analytical scaling constraints of the measured expectation values, and validate this framework by numerical simulations of observables learning tasks. Notably, our analysis reveals that it requires fewer copies of state when measuring some incompatible observables by the multiple-output QNNs, which demonstrates a resource efficiency advantage compared to separately applying projective measurements.
\end{abstract}

\maketitle

\section{Introduction}

Measurement plays a fundamental role in quantum mechanics. The Heisenberg uncertainty principle states that the variances of two observables $A$ and $B$ for any quantum state always satisfy the inequality $\Delta \hat{A} \Delta \hat{B} \geq (\langle[\hat{A},\hat{B}]\rangle/2i)^2$, which means two incompatible observables ($[\hat{A},\hat{B}]\neq 0$) cannot be simultaneously determined by a single type of measurement. 
The intrinsic uncertainty in quantum mechanics fundamentally arises from the collapse of a quantum state to one of the eigenstates of an operator under a projective measurement, also termed strong measurement. Two incompatible observables do not have shared eigenstates. Thus this collapse mechanism directly results in the erasure of all information about non-commuting observables during the measurement process.
Previous studies partially relaxed this restriction by introducing  sequential weak value measurements~\cite{PhysRevLett.95.220401,PhysRevLett.117.170402,PhysRevA.97.012122,Kim2018} or compressive sensing~\cite{PhysRevLett.112.253602} to measure incompatible observables. Weak values only extract a small amount of information from a single measurement and the quantum states basically do not collapse~\cite{RevModPhys.86.307}, which has been investigated theoretically~\cite{PhysRevLett.60.1351,PhysRevLett.92.130402,PhysRevA.76.062105,PhysRevLett.108.070402,PhysRevLett.66.1107,PhysRevLett.104.240401,PhysRevA.85.012107,PhysRevLett.113.200401} and experimentally using photons~\cite{PhysRevLett.94.220405,PhysRevLett.116.180401}.

Quantum machine learning (QML) and quantum neural networks (QNNs)~\cite{Cerezo2021,PhysRevLett.117.130501,PhysRevA.98.012324,Cong2019,Abbas2021,Cerezo2022,Biamonte2017} represent a novel intersection of quantum information and artificial intelligence, promising substantial improvements in quantum information processing capabilities.
The optimization of the parameterized quantum circuits in QNNs is to minimize the loss functions by strategies such as parameter shift rules~\cite{PhysRevLett.118.150503,PhysRevA.98.032309,PhysRevA.103.012405} and quantum natural gradient~\cite{Stokes2020quantumnatural,PhysRevA.106.062416}. Recent studies about precise expressivity of QNNs have enabled us to accurately represent operators using a QNN~\cite{PhysRevResearch.3.L032049,PhysRevLett.132.010602,Hou2023}. And the studies about learning physical properties of many observables~\cite{Huang2020,PhysRevLett.133.040202,PhysRevLett.129.240501} give the potential to combine the QNNs and quantum learning task.

Quantum information science has driven the development of QML and QNNs, which in turn will further advance the field of quantum information science. Here we propose a novel scheme to measure the expectation values of many incompatible observables using QNNs. We prove the existence of such multiple outputs QNNs by constructing a unital quantum channel. We also analytically compute the scaling restriction of the expectation values. The analytical derivation shows that it reduces the number of state copies for some observables. 

\section{Existence of the Unital Channel}

Let us introduce our model as follows. We construct a parameterized unital quantum channel
$\Phi_{\bm{\theta}}$ with trainable parameters $\bm{\theta}$, which is implemented as a QNN, applying on an $n$-qubit quantum state $\rho$. The unitality of $\Phi_{\mathbf{\theta}}$ implies that $\Phi_{\mathbf{\theta}}(I)=I$ with $I$ being the identity matrix.
The Pauli-$Z$ measurement is taken on each qubit of the output state 
$\Phi_{\bm{\theta}}(\rho)$. 
The QNN $\Phi_{\bm{\theta}}$ is learned from the datasets $\{  \rho_l, \tr(\rho_l O_i) \}_{l=1}^{L}$, $i = 1,2,\cdots, n_O$, such that for any $n_O \in [2,n]$ traceless Hermitian observables, each with eigenvalues whose absolute values do not exceed $1$, the expectation value of the $i$-th observable $\Tr (\rho O_i)$ equals to the expectation value of the Pauli-$Z$ measurement on the $i$-th qubit 
$\Tr (\Phi_{\bm{\theta}}(\rho) Z_i)$, 
up to a positive number
$\alpha\in \intervaloc{0,1}$, i.e., 
\begin{align}\label{eq1}
    \Tr \big(\Phi_{\bm{\theta}}(\rho) Z_i\big) = \alpha\Tr(\rho O_i), ~i = 1,\dots, n_O,
\end{align}
where $Z_i$ is the operator composed by the Pauli-$Z$ operator on the $i$-th qubit and the identity operators on all other qubits. 
The schematic of the multiple output QNN is shown in Fig.~\ref{fig_0}.
\begin{figure}[htbp]
  \vspace{-2mm}
  \includegraphics[width=6.14cm,height=3.5cm]{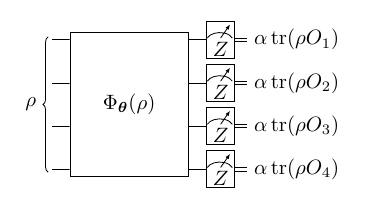}
  \vspace{-3mm}
  \caption{Schematic of a 4-qubit multiple output QNN for measuring incompatible
    observables.}
  \label{fig_0}
\end{figure}

The following theorem guarantees the existence of the multiple output QNN.

\begin{theorem} \label{thm_unital}
For any two traceless Hermitian operators $O_1$ and $O_2$, where each has eigenvalues whose absolute values bounded by $1$, there exists a unital quantum channel $\Phi$ and a number
$\alpha\in \intervaloc{0,1}$,
such that $\alpha O_1 = \Phi^{\dagger}(Z_1)$ and $\alpha O_2 = \Phi^{\dagger}(Z_2)$, where $Z_j$ is the Pauli-$Z$ operator on the $j$-th qubit.
\end{theorem}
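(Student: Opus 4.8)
The plan is to work in the Heisenberg picture and reduce the two-observable problem to two decoupled one-observable problems. Since a completely positive map that is both trace-preserving and unital has an adjoint with the same three properties, it suffices to construct a single unital channel $\Psi$ (which will play the role of $\Phi^{\dagger}$) satisfying $\Psi(Z_1)=\alpha O_1$ and $\Psi(Z_2)=\alpha O_2$, and then set $\Phi=\Psi^{\dagger}$; automatically $\Phi$ is a unital channel and $\Phi^{\dagger}(Z_j)=\Psi(Z_j)=\alpha O_j$. I would build $\Psi$ entirely out of mixed-unitary maps $X\mapsto\sum_k p_k\,U_k X U_k^{\dagger}$, which are automatically unital and completely positive, so that unitality and complete positivity come for free and only the two image conditions must be engineered.

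Second, I would decouple the two conditions using Pauli twirls. Let $\mathcal T_j(X)=\tfrac12\big(X+X_j X X_j\big)$ be the twirl by the Pauli-$X$ on qubit $j$. Because $X_2$ commutes with $Z_1$ and anticommutes with $Z_2$, the map $\mathcal T_2$ fixes $Z_1$ and annihilates $Z_2$; symmetrically $\mathcal T_1$ fixes $Z_2$ and annihilates $Z_1$. Thus, if I can find unital channels $\mathcal M_j$ with $\mathcal M_j(Z_j)=O_j$, then the mixture
\begin{equation}
\Psi=\tfrac12\,\mathcal M_1\circ\mathcal T_2+\tfrac12\,\mathcal M_2\circ\mathcal T_1
\end{equation}
is unital and completely positive, being a convex combination of compositions of such maps, and a direct evaluation gives $\Psi(Z_1)=\tfrac12 O_1$ and $\Psi(Z_2)=\tfrac12 O_2$, i.e. $\alpha=1/2\in\intervaloc{0,1}$.

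Third, the remaining core is the single-observable statement: for a traceless Hermitian $O$ with $\lVert O\rVert\le 1$, there is a mixed-unitary channel $\mathcal M$ with $\mathcal M(Z_1)=O$. Since $\mathcal M(Z_1)=\sum_k p_k\,U_k Z_1 U_k^{\dagger}$ ranges over the convex hull of the unitary orbit of $Z_1$, this amounts to showing that $O$ lies in that hull. Here $Z_1$ has spectrum $\{+1,-1\}$ with each eigenvalue of multiplicity $2^{n-1}$, and the classical description of the convex hull of a unitary orbit (the Ky~Fan / Schur--Horn majorization polytope) identifies this hull with the set of Hermitian matrices whose eigenvalue vector is majorized by that of $Z_1$. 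I would then verify that every eigenvalue vector in $[-1,1]^{2^n}$ with vanishing sum is majorized by the balanced sign vector $(1,\dots,1,-1,\dots,-1)$: an elementary partial-sum check in which the traceless condition supplies the equal total and the bound $\lVert O\rVert\le 1$ supplies the dominance inequalities. This yields $\mathcal M(Z_1)=O$ exactly, with no further loss.

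The main obstacle is this last convex-hull/majorization step, since it is precisely where the hypotheses $\operatorname{tr}O=0$ and $\lVert O\rVert\le 1$ are consumed, and where the necessity of $\alpha\le 1$ manifests, as the hull sits inside the traceless, norm-one ball. Everything else---unitality, complete positivity, and the decoupling of the two observables---follows from the closure of unital completely positive maps under composition and convex combination, so the only quantitative content is matching the spectral constraint on $O$ to the majorization order generated by $Z_1$. Once the single-observable lemma is in place, assembling $\Psi$ and passing to $\Phi=\Psi^{\dagger}$ completes the argument with $\alpha=1/2$.
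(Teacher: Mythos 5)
Your proof is correct, but it takes a genuinely different route from the paper's. The paper works entirely at the level of the Choi matrix: it defines the adjoint map by sending every Pauli-group element other than $I$, $Z_1$, $Z_2$ to zero, writes $J(\Phi_\alpha^{\dagger}) = I\otimes I + \alpha\, O_1\otimes Z_1 + \alpha\, O_2\otimes Z_2$, and uses Weyl's eigenvalue inequality to show $\lambda_{\min}\ge 1-2\alpha$, so complete positivity holds for $\alpha\le 1/2$; unitality follows from the partial trace of $J$. You instead build the adjoint explicitly as a mixed-unitary map: Uhlmann's majorization theorem supplies the single-observable channels $\mathcal M_j(Z_j)=O_j$ (your partial-sum check of $\lambda(O)\prec\lambda(Z_1)$ is where tracelessness and $\lVert O\rVert\le 1$ enter, and it is correct), and the Pauli-$X$ twirls $\mathcal T_j$ cleanly decouple the two conditions at the cost of the same factor $\alpha=1/2$. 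Both arguments generalize to $n_O$ observables with $\alpha=1/n_O$ (the paper notes this; in your scheme one twirls by $X$ on all qubits except the $j$-th). What your construction buys is an explicitly mixed-unitary channel, which is exactly the ansatz the paper later uses for its QNN architecture but does not itself establish in the existence proof (a unital channel need not be mixed-unitary in dimension $>2$); what the paper's approach buys is a single positivity criterion that feeds directly into its subsequent semidefinite-programming optimization of $\alpha_{\max}$, where (as the paper observes in its Appendix on majorization) the Uhlmann-type necessary conditions you rely on are no longer tight for two observables. For the existence statement at $\alpha=1/2$, however, your sufficient construction is sound.
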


\begin{proof}--
  To prove the existence of a quantum channel $\Phi$, we may show the existence of its adjoint channel $\Phi^{\dagger}$. Note that the trace preservation of channel $\Phi$ implies that $\Phi^{\dagger}(I)=I$. 
  In addition the channel $\Phi^{\dagger}$ must satisfy two equations $\Phi^{\dagger}(Z_1) = \alpha O_1$ and $\Phi^{\dagger}(Z_2) = \alpha O_2$. 
  To specify the adjoint channel $\Phi^{\dagger}$, we need to specify how $\Phi^{\dagger}$ acting on all the Pauli group elements $\{M_j\}$. Here we assume $\Phi_{\alpha}^{\dagger}(M_j)=0$, where $M_j$ is any Pauli group element expect $M_0=I$, $M_1=Z_1$, and $M_2=Z_2$.

  The Choi representation of the adjoint channel $\Phi^{\dagger}$ is
  \begin{align}
    \label{eq:1}
   J(\Phi_{\alpha}^{\dagger}) & = 4^n\Phi_{\alpha}^{\dagger}\otimes \mathbb{I} (|\phi\rangle\langle\phi|) \nonumber\\
   & = \sum_j \Phi_{\alpha}^{\dagger}(M_j) \otimes M_j \nonumber\\
   & =  I \otimes I + \alpha O_1 \otimes Z_1 + \alpha O_2 \otimes Z_2,
  \end{align}
  where the $n$-qubit maximally entangled state $|\phi\rangle = \sum_{j=0}^{2^n-1} |jj\rangle /\sqrt{2^n}$ is defined on the product Hilbert space $\mathcal{H}_{\mathcal{Y}}\otimes \mathcal{H}_{\mathcal{X}}$, and its density matrix is $|\phi\rangle\langle\phi|=\sum_j M_j\otimes M_j/4^n$. 
  We use Choi representation of $\Phi^{\dagger}$ to prove the completely positive of $\Phi^{\dagger}$ by Theorem 2.22 in Ref.~\cite{Watrous_2018}: $\Phi^{\dagger}$ is completely positive iff $J(\Phi^{\dagger})\ge0$.

  To prove $J(\Phi^{\dagger})\ge0$, let us denote the smallest eigenvalue of operator $A$ with $\lambda_{\text{min}}(A)$. Then
  \begin{align}
    \label{eq:3}
   & \lambda_{\text{min}}(J(\Phi_{\alpha}^{\dagger})) \nonumber\\
    & = 1 + \alpha \lambda_{\text{min}}(O_1 \otimes Z_1 +  O_2 \otimes Z_2) \nonumber\\
    & \ge 1 + \alpha \left( \lambda_{\text{min}}(O_1 \otimes Z_1) + \lambda_{\text{min}}(O_2 \otimes Z_2) \right) \nonumber\\
    & \ge 1 - 2 \alpha,
  \end{align}
  where we have used the Weyl inequality~\cite{MatrixAnalysis} in the third line.
  Thus when $0< \alpha \le \frac{1}{2}$, $\lambda_{\text{min}}(J(\Phi_{\alpha}^{\dagger})) \ge 0$, and $ J(\Phi_{\alpha}^{\dagger}) \ge 0$. This shows that $\Phi_{\alpha}^{\dagger}$ is completely positive. Then $\Phi_{\alpha}^{\dagger}$ has a Kraus representation, which implies that $\Phi_{\alpha}$ has an adjoint Kraus representation, and it is also completely positive.

  Following Eq.~\eqref{eq:1}, we obtain $\Tr_{\mathcal{Y}} J(\Phi_{\alpha}^{\dagger}) = I$. Hence $\Phi_{\alpha}$ ($0<\alpha\le \frac{1}{2}$) is unital by Theorem 2.26 in Ref.~\cite{Watrous_2018}. Therefore we complete the proof.

\end{proof}

From the above proof, Theorem~\ref{thm_unital} can be generalized to the cases with more than two traceless Hermitian operators.

\section{Determine Maximal $\alpha$}

From Eq.~\eqref{eq:1}, we obtain that the measurement fluctuations of $Z_1$ and $Z_2$ will decrease with the increasing of $\alpha$. Hence it is reasonable to optimize the QNN to find the maximal $\alpha$, denoted by $\alpha_{\max}$. From the above proof, for any two traceless operators $O_1$ and $O_2$ with $\norm{O_1}\le 1$ and $\norm{O_2}\le 1$, we obtain $\alpha_{\text{max}}\ge\frac{1}{2}$.

For a general unital channel $\Phi_{\alpha\boldsymbol{\beta}}$, we need to specify that for $j\notin\{0,1,2\}$,
\begin{equation}
  \label{eq:4}
 \Phi_{\alpha\boldsymbol{\beta}}^{\dagger}(M_j) = \sum_{k\neq 0} \beta_{k j} M_k.
\end{equation}
Then the Choi representation of $\Phi^{\dagger}$ is
\begin{align}
  \label{eq:5}
  J(\Phi_{\alpha\boldsymbol{\beta}}^{\dagger}) = 
  I \otimes I + \alpha \sum_{i=1}^2 O_i \otimes Z_i +  \sum_{j\notin\{0,1,2\}}^{k\neq 0} \beta_{k j} M_k \otimes M_j.
\end{align}
The task is to find an optimal $\Phi^{\dagger}$ satisfying
\begin{equation}
  \label{eq:6}
  \alpha_{\text{max}} = \{\max_{\boldsymbol{\beta}} \alpha: J(\Phi_{\alpha\boldsymbol{\beta}}^{\dagger}) \ge 0\},
\end{equation}
which can be expressed as a problem in semidefinite programming~\cite{gartner2012approximation}, see details in Appendix~\ref{app_maximal_alpha_method}.

To find $\alpha_{\text{max}}$, we design the following iterative algorithm. First, let us define the Choi representation of $\Phi^{\dagger}$ for the $m$-th iteration,
\begin{align}
      \label{eq:7}
    &J(\Phi_{\alpha\boldsymbol{\beta}}^{(m)\dagger}) \nonumber\\
    &=  I \otimes I 
    + \alpha^{(m)} \sum_{i=1}^2 O_i \otimes Z_i
    +  \sum_{j\notin\{0,1,2\}}^{k\neq 0} \beta^{(m)}_{k j} M_k \otimes M_j.
\end{align}

Our strategy is to choose suitable $\beta_{kj}^{(m)}$ such that 
$\alpha^{(m)}\ge \alpha^{(m-1)}$. 
Initially, $m=0$, we take $\beta^{(0)}_{kj}=0$, and $J(\Phi^{(0)\dagger}_{\alpha\boldsymbol{\beta}})=J(\Phi^{\dagger}_{\alpha})$. In the $m$-th step, we use $\lambda_{\text{min}}(J(\Phi^{(m)\dagger}_{\alpha\boldsymbol{\beta}}))=0$ to determine 
$\alpha^{(m)}$, 
and then solve the ground state of $J(\Phi^{(m)\dagger}_{\alpha\boldsymbol{\beta}})$: $\rho^{(m)}_G=\frac{1}{n^{(m)}_d}\sum_{i=1}^{n^{(m)}_d}|g^{(m)}_i\rangle\langle g^{(m)}_i|$, where $n^{(m)}_d$ is the ground state degeneracy. Next calculate $M_{kji}\equiv\langle g^{(m)}_i|M_k\otimes M_j|g^{(m)}_i\rangle$. If for any $i$, $M^{(m)}_{kji}$ have the same sign $s(mkj)\in\{-1,0,+1\}$, then we take the increment $\Delta^{(m)}_{kj}s(mjk)\ge0$, and $\beta^{(m+1)}_{kj}=\beta^{(m)}_{kj}+\Delta^{(m)}_{kj}$. Otherwise, $\Delta^{(m)}_{kj}=0$. When the step $m$ becomes larger, 
$\alpha^{(m)}$
limits to $\alpha_{\text{max}}$.

We can prove by perturbation theory that this method can always find a maximum $\alpha_{\text{max}}$. The proof and implement details of this method are presented in Appendix \ref{app_maximal_alpha_method}. The numerical results in the following section show that the $\alpha_{\text{max}}$ computed by our method is the same as the $\alpha$ obtained by optimizing the loss function in QNNs.

\section{Analysis of Samples Complexity}

Suppose we have obtained a well-trained optimized unital channel with $\alpha_{\text{max}}$, we will use it as a QNN to measure the expectation values of two 2-qubit observables $O_1$ and $O_2$ on a state $\rho$.
For direct projective measurements without QNNs, each measurement returns an eigenvalue, denoted as $\hat{o}_{1}$ for $O_1$ and $\hat{o}_{2}$ for $O_2$. In contrast, when applying a QNN, the Pauli-$Z$ measurements on the output qubits return outcomes $\hat{z}_{1},\hat{z}_{2} \in \{+1,-1\}$.
Define $N_O = N_{O_1} + N_{O_2}$ as the total number of copies of state $\rho$ prepared for projective measurement of $O_{1}$ and $O_{2}$, where $N_{O_j}$ is the sample size allocated to each observable. Let $N_Z$ denotes the number of copies of $\rho$ used for Pauli-$Z$ measurements with QNN. The estimate of the $j$-th observable expectation value is $\hat{O}_{j} = \sum_{i=1}^{N_{O_{j}}} \hat{o}_{j,i} /{N_{O_{j}}}$ for projective measurements, and $\hat{Z}_{j} = \sum_{i=1}^{N_{Z}} \hat{z}_{j,i} /(\alpha_{\text{max}}{N_{Z}})$ for Pauli $Z$ measurements with QNNs, where $\hat{o}_{j,i}$ ($\hat{z}_{j,i}$) is an eigenvalue of $O_j$ ($Z_j$) returned in the $i$-th projective measurement of $O_j$ ($Z_j$). When measurement numbers $N_O$ and $N_Z$ approach infinity, the limits of $\hat{Z}_j$ and $\hat{O}_j$ are denoted as $\mathbb{E}[\hat{z}_{j}/\alpha_{\text{max}}]$ and $\mathbb{E}[\hat{o}_{j}]$ respectively for $j \in \{1,2\}$. Following Eq.~\eqref{eq1}, $\mathbb{E}[\hat{z}_{j}/\alpha_{\text{max}}]=\mathbb{E}[\hat{o}_{j}] = \tr(\rho O_j)$. The variance of a random variable $\hat{o}$ is $\text{Var} [\hat{o}] = \mathbb{E}[\hat{o}^2] - \mathbb{E}[\hat{o}]^2$. We derive two conclusions in the following theorem.

\begin{theorem}\label{thm_complexity}
    (i) If we only focus on measuring one observable, the variance of Pauli-$Z$ measurements is always greater than direct projective measurements, 
    \begin{align}
        \text{Var} [\hat{z}_j/\alpha_{\max}] \geq \text{Var} [\hat{o}_j],~j=1,2.
    \end{align}
    (ii) Suppose by preparing $N_O$ copies of 
    $\rho$
    and taking projective measurements, with high probability, it achieves
    $\big|\hat{O}_{j}-\mathbb{E}[\hat{o}_{j}]\big|\leq \epsilon, ~j = 1,2.$
    Then, by preparing $N_Z = \mathcal{O}\big(\lambda N_O\big)$ copies of $\rho$, where
    \begin{align}\label{N_Z_without_Haar}
        \lambda = \frac{ 1  - \alpha_{\max}^2 \min\{\mathbb{E}[\hat{o}_1]^2,\mathbb{E}[\hat{o}_2]^2\}}{\alpha_{\max}^2[\text{Var}[\hat{o}_1] + \text{Var}[\hat{o}_2]]},
    \end{align}
    with high probability, the estimator $\hat{Z}_{j}$ achieves 
    $\big|\hat{Z}_{j}-\mathbb{E}[\hat{o}_j]\big|\leq \epsilon, ~j = 1,2$.
    
   If $\rho = |\psi \rangle\langle\psi|$ is a pure state, the average number of copies under Haar measure becomes $N_Z = \mathcal{O}\big(\lambda_{H}N_O\big)$,
   where 
   \begin{align}\label{N_Z_with_Haar}
       \lambda_{H} = \frac{d(d+1)-\alpha_{\max}^2 \min \{\tr(O_1^2),\tr(O_2^2)\}}{d\alpha_{\max}^2[\tr(O_1^2)+\tr(O_2^2)]},
   \end{align}
   and $d$ is the dimension of the Hilbert space.
\end{theorem}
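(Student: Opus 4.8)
The plan is to treat the two parts separately: part (i) is a direct variance computation that exploits the binary nature of the Pauli-$Z$ outcomes, while part (ii) combines those variance expressions with a standard empirical-mean concentration argument, supplemented by Haar second-moment integrals for the pure-state refinement.

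For part (i), I would begin from the fact that $\hat{z}_j\in\{+1,-1\}$ forces $\hat{z}_j^2=1$, hence $\mathbb{E}[\hat{z}_j^2]=1$. Combining this with $\mathbb{E}[\hat{z}_j]=\alpha_{\max}\mathbb{E}[\hat{o}_j]$ from Eq.~\eqref{eq1} gives
\begin{equation}
\text{Var}[\hat{z}_j/\alpha_{\max}] = \frac{1}{\alpha_{\max}^2}\big(1-\alpha_{\max}^2\mathbb{E}[\hat{o}_j]^2\big) = \frac{1}{\alpha_{\max}^2} - \mathbb{E}[\hat{o}_j]^2 .
\end{equation}
On the projective side, since every eigenvalue of $O_j$ has modulus at most $1$, we have $\mathbb{E}[\hat{o}_j^2]=\tr(\rho O_j^2)\le 1$, so $\text{Var}[\hat{o}_j]=\mathbb{E}[\hat{o}_j^2]-\mathbb{E}[\hat{o}_j]^2\le 1-\mathbb{E}[\hat{o}_j]^2$. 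Because $\alpha_{\max}\in\intervaloc{0,1}$ implies $1/\alpha_{\max}^2\ge 1$, the claimed inequality $\text{Var}[\hat{z}_j/\alpha_{\max}]\ge\text{Var}[\hat{o}_j]$ follows at once.

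For part (ii), the key structural observation is that $Z_1$ and $Z_2$ act on distinct output qubits and therefore commute, so each of the $N_Z$ copies fed through the QNN yields a simultaneous sample of both $\hat{z}_1$ and $\hat{z}_2$; by contrast, the incompatible $O_1,O_2$ must be measured on disjoint batches, so the projective budget splits as $N_O=N_{O_1}+N_{O_2}$. A standard concentration bound (Chebyshev, or the central limit theorem for large sample size) states that an empirical mean of $N$ i.i.d.\ samples of variance $\sigma^2$ attains accuracy $\epsilon$ with high probability once $N=\mathcal{O}(\sigma^2/\epsilon^2)$. Allocating the projective budget so that both estimates reach accuracy $\epsilon$ gives the minimal total $N_O=\mathcal{O}\big((\text{Var}[\hat{o}_1]+\text{Var}[\hat{o}_2])/\epsilon^2\big)$, whereas on the QNN side the binding constraint is the larger of the two rescaled variances, $N_Z=\mathcal{O}\big(\max_j\text{Var}[\hat{z}_j/\alpha_{\max}]/\epsilon^2\big)$. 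Taking the ratio, the factor $\epsilon$ and the shared confidence constant cancel, and using $\max_j\big(1/\alpha_{\max}^2-\mathbb{E}[\hat{o}_j]^2\big)=1/\alpha_{\max}^2-\min_j\mathbb{E}[\hat{o}_j]^2$ reproduces $\lambda$ in Eq.~\eqref{N_Z_without_Haar}.

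For the pure-state refinement I would average over the Haar measure using the first two moments of a random state, $\mathbb{E}_{\text{Haar}}[\langle\psi|O|\psi\rangle]=\tr(O)/d$ and $\mathbb{E}_{\text{Haar}}[\langle\psi|O|\psi\rangle^2]=(\tr(O)^2+\tr(O^2))/(d(d+1))$. Since each $O_j$ is traceless, these collapse to $\mathbb{E}_{\text{Haar}}[\mathbb{E}[\hat{o}_j]^2]=\tr(O_j^2)/(d(d+1))$ and, using $\mathbb{E}_{\text{Haar}}[\mathbb{E}[\hat{o}_j^2]]=\tr(O_j^2)/d$, to $\mathbb{E}_{\text{Haar}}[\text{Var}[\hat{o}_j]]=\tr(O_j^2)/(d+1)$. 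Substituting the averaged numerator and denominator of $\lambda$ and clearing the common factor $d(d+1)$ yields $\lambda_H$ in Eq.~\eqref{N_Z_with_Haar}. I expect the main obstacle to be the careful bookkeeping in part (ii) rather than any single hard estimate: one must justify that the incompatibility of $O_1,O_2$ forces the sum-of-variances scaling for $N_O$ while the commuting $Z_1,Z_2$ permit the max-of-variances scaling for $N_Z$ — this asymmetry is precisely what encodes the resource advantage — together with the consistent handling of the $\min$ term when passing to Haar averages, where $\min_j\mathbb{E}[\hat{o}_j]^2$ is replaced by $\min_j\mathbb{E}_{\text{Haar}}[\mathbb{E}[\hat{o}_j]^2]=\min\{\tr(O_1^2),\tr(O_2^2)\}/(d(d+1))$, an interchange justified by the monotonicity of the Haar average in $\tr(O_j^2)$.
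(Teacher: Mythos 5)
Your proposal is correct and follows essentially the same route as the paper's Appendix B: the same variance computation exploiting $\hat{z}_j^2=1$ and $\mathbb{E}[\hat{o}_j^2]\le 1\le 1/\alpha_{\max}^2$ for part (i), the same sum-versus-max sample allocation and ratio for part (ii), and the same Haar second-moment formulas with tracelessness for $\lambda_H$. The only cosmetic difference is that the paper invokes the Bernstein inequality where you cite Chebyshev/CLT, but since the linear-in-$\epsilon$ Bernstein correction is dropped in the asymptotic regime anyway, both yield the same $\mathcal{O}(\sigma^2/\epsilon^2)$ scaling and the same $\lambda$.
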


The proof details are presented in Appendix~\ref{app_proof_number}. The result $(i)$ states that if we only measure one observable or focus on the output of a single qubit, to achieve the same measurement accuracy, the QNN method cannot decrease the number of state copies compared with direct projective measurement. On the other hand, result $(ii)$ shows that, the $\lambda$ in Eq.~\eqref{N_Z_without_Haar}, or the $\lambda_H$ in Eq.~\eqref{N_Z_with_Haar}, determines whether the QNN approach can reduce the number of state copies when measuring two observables.
Following this analysis, the advantages of QNN method becomes apparent only when simultaneously measuring multiple observables. In the following numerical results, we provide an example of two observables for which the QNN decreases the sample complexity.

\section{Architecture of the mixed-unitary channel}

A unital channel is termed a mixed-unitary channel if it can be expressed as $\mathcal{E}_{\bm{\theta},\vec{\omega}}(\cdot) = \sum_{i} w_{i} U_{i} \cdot U_{i}^{\dagger}$, where $U_{i}$ has trainable parameters $\bm{\theta}_i$, and $w_i$ is the probability to perform the unitary transformation $U_i$ satisfying $\sum_i w_{i} = 1$~\cite{Watrous_2018}. In the model training process, we employ a parameterized mixed-unitary channel as the ansatz to approximate the target unital channel $\Phi_{\bm{\theta}}$. An architecture of a mixed-unitary channel for a 2-qubit input state $\rho$ is shown in Fig.~\ref{fig_circuit}.
\begin{figure}[htbp]
\vspace{-3mm}
\includegraphics[width=8.0cm,height=2.8cm]{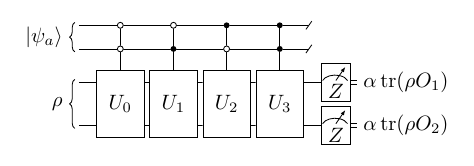}
\vspace{-3mm}
\caption{An illustration of mixed-unitary channel acting as multiple output QNNs for a $2$-qubit state $\rho$. 
The $2$-qubit ancillary state $|\psi_a\rangle = \sqrt{w_{0}}|00\rangle + \sqrt{w_{1}}|01\rangle + \sqrt{w_{2}}|10\rangle + \sqrt{w_{3}}|11\rangle$. 
The open circle notation means a control gate conditioning on the qubit being set to $|0\rangle$, while a closed circle indicates conditioning on the qubit being set to $|1\rangle$.}
\label{fig_circuit}
\end{figure}

The circuit requires an ancillary state $|\psi_a\rangle = \sum_{i=0}^{d_a-1}\sqrt{w_{i}}|i\rangle$,
where $d_a$ is the Hilbert space dimension of the ancillary system, $|i\rangle$ is the $i$-th computational basis, and the set of weights $\{w_i\}$ are trainable parameters with $\sum_i w_i = 1$. Then a control-$U$ gate is implemented, $CU = \sum_{i=0}^{d_a-1} |i\rangle\langle i| \otimes U_i$,
where $U_i$ is a unitary gate of the $n$-qubit system conditioning on the $i$-th computational basis $|i\rangle$ of the ancillary state.
Before measurement, the combined state becomes $\rho_{T} = \sum_{i,j=0}^{d_a-1} \sqrt{w_{i}w_{j}} |i\rangle\langle j| \otimes U_{i}\rho U_{j}^{\dagger}$.
After trace out the ancillary state, the state to be measured is $\Tr_a \rho_T = \sum_{i=0}^{d_a-1} w_{i} U_{i} \rho U_{i}^{\dagger} = \mathcal{E}(\rho)$.
At the end of the circuit we make the Pauli-$Z$ measurements on every qubit of $\mathcal{E}(\rho)$. By optimizing the parameters $\{w_{i}\}$ and $\{\bm{\theta}_i\}$, we maximize the parameter $\alpha$ requiring that the expectation value of $\Tr(\rho O_j)$  equals to the expectation value of $Z_j$ on $\mathcal{E}(\rho)$ up to the positive number $\alpha$, i.e., $\alpha \Tr(\rho O_j) = \Tr (\mathcal{E}(\rho) Z_j)$ with $Z_1=Z\otimes I$ and $Z_2=I\otimes Z$. Note that the above equations can be rephrased in the following equivalent form
\begin{align}
    \alpha O_j  = \sum_{i=0}^{d_a-1} w_{i} U_{i}^{\dagger} Z_j U_{i} = \mathcal{E}^{\dagger}(Z_j).\label{O1}
\end{align}
We point out that such quantum channel can also be realized by a randomness-enhanced QNN~\cite{PhysRevLett.132.010602}.

\section{Numerical Results}

Suppose we have two data sets $\{ ( \rho_l^{(1)}, \mathcal{O}_l^{(1)} ) \}_{l=1}^{L}$ and $\{ ( \rho_m^{(2)}, \mathcal{O}_m^{(2)} ) \}_{m=1}^{M}$. We define the loss function as
\begin{align}\label{loss_function}
    \mathcal{L} (\bm{\theta},\vec{w},\alpha) &= \mathcal{L}_O(\bm{\theta},\vec{w}) +  \mathcal{L}_{\alpha}(\alpha) \nonumber \\
    &= \frac{1}{L} \sum_{l=1}^L \bigg\{\alpha\mathcal{O}_l^{(1)} - \tr(Z_1 \mathcal{E}_{\bm{\theta},\vec{w}}(\rho_l^{(1)})) \bigg\}^2 \nonumber \\ 
    &+ \frac{1}{M} \sum_{m=1}^M \bigg\{\alpha\mathcal{O}_m^{(2)} - \tr(Z_2 \mathcal{E}_{\bm{\theta},\vec{w}}(\rho_m^{(2)})) \bigg\}^2 - \alpha.
\end{align}
The term $\mathcal{L}_{\alpha} = -\alpha$ is used to maximize $\alpha$.
The gradient descending optimization process of $\mathcal{L}$ by Adam optimizer is shown in Fig.~\ref{fig_operator_learning}(a). The loss function $\mathcal{L}_{O}$ as a function of training epoch for different $d_a$ is plotted. The optimized circuit approximates the unital channel well when $d_a \geq 3$.

We note that mixed-unitary channel is related to the Uhlmann theorem~\cite{nielsen2002introduction}, which states that there exists a mixed-unitary channel $\mathcal{E}$ such that $\mathcal{E}(A) = B$ if and only if $A\succ B$. And $A\succ B$ if and only if $\lambda_{A} \succ \lambda_{B}$, where $\lambda_{O}$ is the vector of eigenvalues for the operator $O$ in descending order. Here the majorization $\vec{a} \prec \vec{b}$ for two $d$-dimensional vectors $\vec{a}$ and $\vec{b}$ whose components arranged in descending order is defined as (i) $\sum_{i=1}^{d'} a_i \leq \sum_{i=1}^{d'} b_i $, $1\leq d' \leq d$, and (ii) $\sum_{i=1}^d a_i = \sum_{i=1}^d b_i = \text{constant}$. In the case of two operators $\alpha O_1 = \mathcal{E}(Z_1)$ and $\alpha O_2 = \mathcal{E}(Z_2)$, we identify three majorization constraints, which are $Z_1 \succ \alpha O_1 $, $Z_2 \succ \alpha O_2 $ and  $xZ_1 + yZ_2 \succ \alpha( xO_1 + yO_2) $, with $x$ and $y$ being any real numbers. The details are analyzed in Appendix \ref{app_majorization}. In the previous randomness enhanced QNN \cite{PhysRevLett.132.010602}, the Uhlmann theorem can be used to prove the existence of the mixed unitary channel which only learns one observable $\mathcal{E}^{\dagger}(Z_1) = O$. However in our model with learning two observables, this majorization constraints is not tight compared with the completely positive constraints of the unital channel. In Fig.~\ref{fig_operator_learning}(b), we plot the maximal $\alpha$ obtained from numerical optimizations, analytical computed $\alpha_{\text{max}}$, as well as the $\alpha_{\text{maj}}$ derived only from the majorization constraints. The maximal $\alpha$ from numerical optimization of QNN matches theoretical predictions, while the majorization constraints are not tight, i.e., $\alpha = \alpha_{\text{max}} \leq \alpha_{\text{maj}}$.

\begin{figure}[htbp]

\subfigure[]{
\includegraphics[width=4.5cm,height=4.2cm]{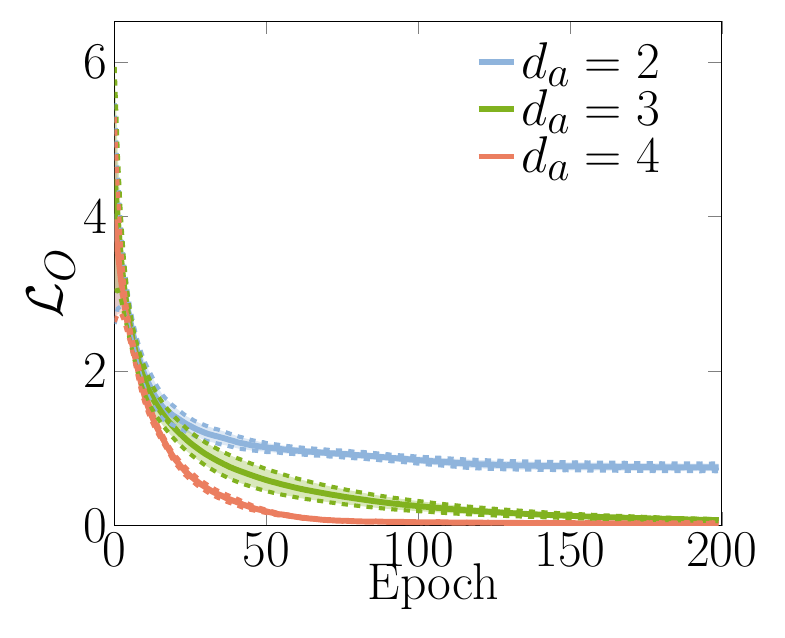}
}
\hspace{-8mm}
\subfigure[]{
\includegraphics[width=4.3cm,height=4.13cm]{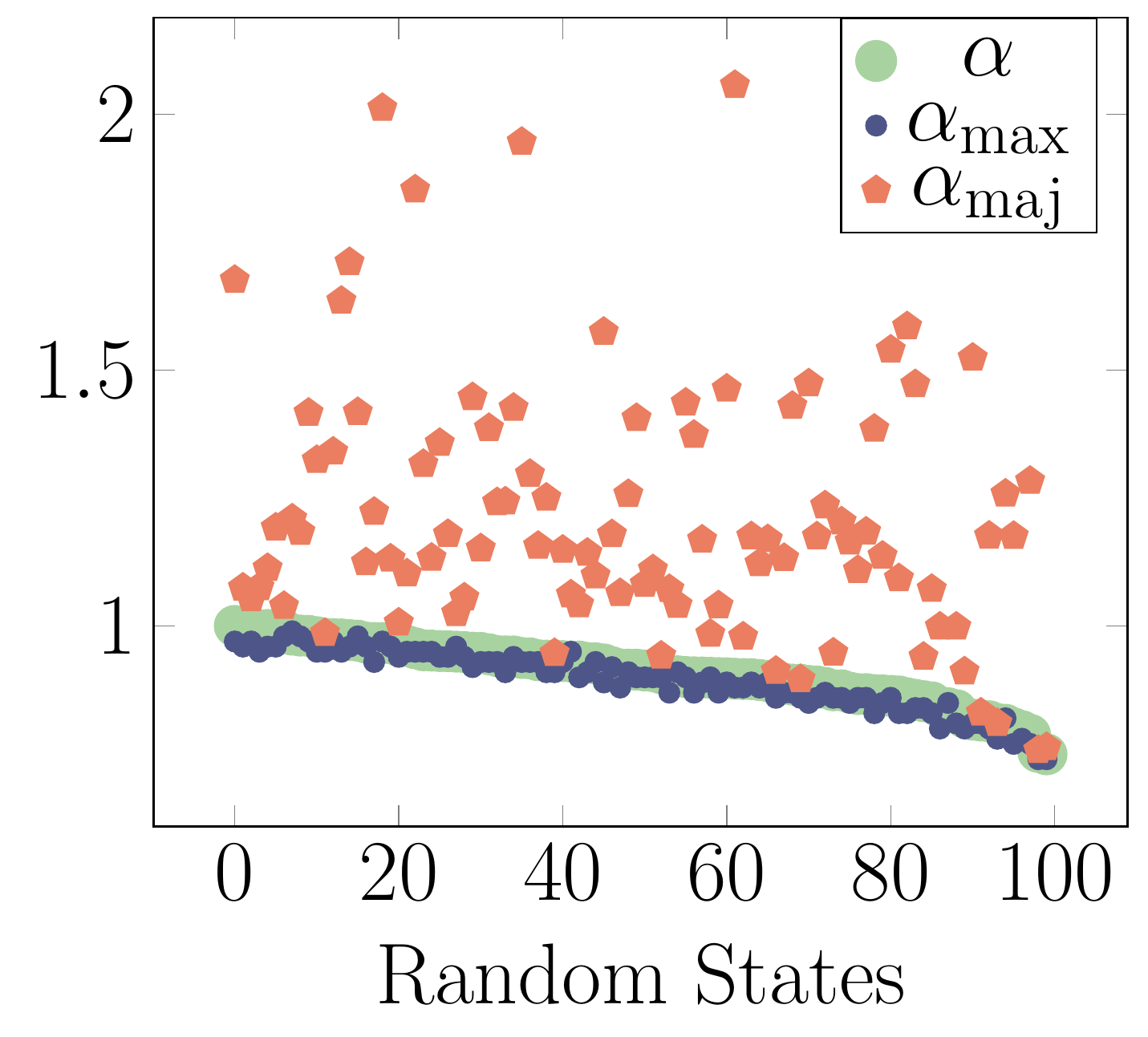}
}
\vspace{-3mm}
\caption{
(a) Learning 2-qubit operators by optimizing $\mathcal{L}$ for $d_a = 2,3$ and $4$. The size of data set is $L = M = 100$, and the learning rate is $0.05$. 
The $\mathcal{L}_O$ is averaged by $100$ pairs of random operators $O_1$ and $O_2$. 
(b) The comparison between numerical results $\alpha$, analytical methods $\alpha_{\text{max}}$, and majorization limitations $\alpha_{\text{maj}}$ for $100$ random operators with $d_a = 4$. 
The order of $\alpha$ is rearranged as descending order. 
}

\label{fig_operator_learning}
\end{figure}

\begin{figure}[htbp]
\vspace{-3mm}
\includegraphics[width=6.0cm,height=6.0cm]{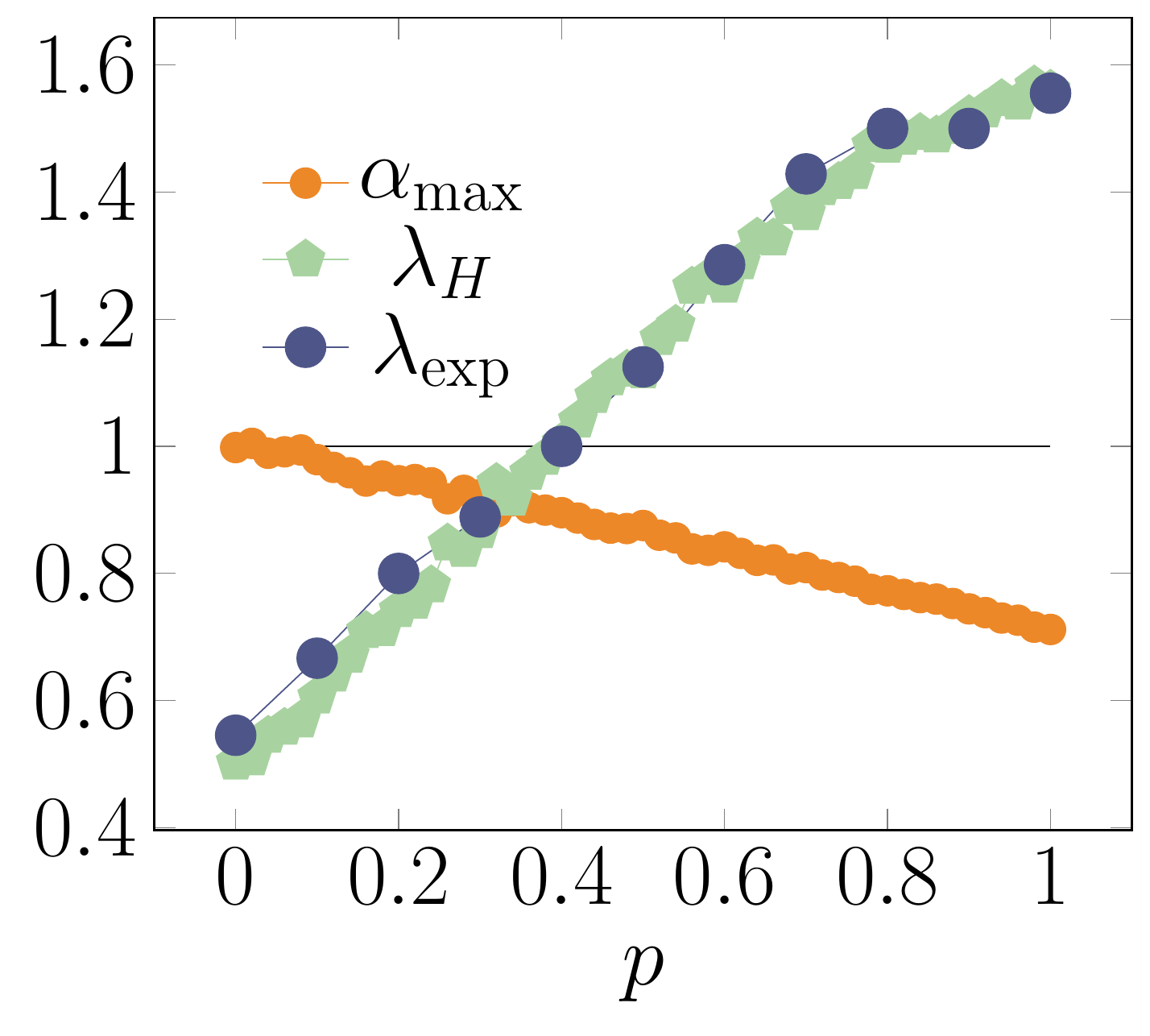}
\vspace{-3mm}
\caption{
The ratio $\lambda_{H}$, the number $\alpha_{\max}$ and the simulation results $\lambda_{\text{exp}}$ as the function of $p$.
The $\alpha_{\max}$ is obtained by minimizing the loss function of Eq.~\eqref{loss_function}. 
Each point of $\lambda_{\text{exp}}$ takes average of $1000$ random pure states. 
}
\label{fig_copies_number}
\end{figure}

We also numerically investigate the sample complexity of learning multiple observables. We construct an example of a series of observables $O_1(p) = (1-p) Z_1 Z_2 + (p/2) (Z_1 + Z_2)$ with $p\in[0,1]$, and $O_2 = X_1X_2$. When $p = 0$, the operators commute $[O_1(0),O_2] = 0$; when $p > 0$, operators do not commute $[O_1(p),O_2] \neq 0$. There exists a $p_c$ such that, when $p>p_c$, using QNNs requires more copies than classical projective measurements. Whereas for $p<p_c$, the QNNs outperform classical methods. In Fig.~\ref{fig_copies_number}, we plot the $\lambda_{H}$ as the function of $p\in [0,1]$. It shows that the QNNs consume fewer copies for $p<0.4$.
We numerically simulate the measurements with QNNs and respective projective measurements on $O_1$ and $O_2$, and plot the $\lambda_{\text{exp}} = N_{Z}^{\text{exp}}/N_O^{\text{exp}}$ in Fig.~\ref{fig_copies_number}, where the $N_{Z}^{\text{exp}}$ and $N_O^{\text{exp}}$ are the number of states copies required to get the expectation value with error less than $0.01$. When $p<0.4$, using QNN method consumes fewer copies than direct projective measurements.

Another example is taking $O_1 = O_2 = O$, where $O$ can be any observable. In this case, however,  the number of copies generally increases. The numerical results show that the $\lambda_{H}$'s for $100$ random observables $O$ are generally greater than $2$. It means that if we copy one observable into two replicas, it fails to reduce the number of state copies. To reduce the number of copies, $\lambda_H$ in Eq.~\eqref{N_Z_with_Haar} needs to be less that $1$, which becomes
\begin{align}\label{alpha_inequality}
    \alpha_{\max}^{2}\tr(O^2)> \frac{(d+1)d}{(2d+1)}>2,
\end{align}
where the second inequality is obtained when $d\geq4$. By the majorization constraints analyzed in details in Appendix~\ref{app_majorization}, in general $\alpha_{\max}$ and $\tr(O^2)$ can not be large simultaneously, which makes the inequality of Eq.~\eqref{alpha_inequality} impossible.

\section{Discussion and outlook}

We propose a multiple-output QNN tailored for simultaneous measurement of incompatible observables using only single-qubit Pauli-$Z$ measurements on spatially separated qubits. This QNN-based measurement protocol enables concurrent extraction of information of multiple non-commuting observables. The expectation value of the $i$-th observable equal to the expectation value of the Pauli-$Z$ measurement on the $i$-th qubit up to a number $\alpha_{\text{max}}$.
In our QNN method, the collapse of the measured qubit remains localized, thereby it can bypass the restrictions imposed by the Heisenberg uncertainty principle.
It is worthy to note that in our protocol we only get the expectation values $\langle Z_i \rangle = \alpha\langle O_i \rangle$, the collapsed states $|0\rangle$ and $|1\rangle$ on the reduced density matrix of $i$-th qubit do not have a direct connection with the eigenstates of observable $O_i$.

As a QML model, our model can learn many non-commuting observables in one quantum circuit. And it can reduce the sample complexity if we measure the expectation values for some observables by using the optimized circuit.
Our protocol is also a generalization of the randomness-enhanced QNN~\cite{PhysRevLett.132.010602}. This previous work proved the existence of a mixed-unitary channel for learning one observable. Our new protocol gives a more precise and general analytical results about learning many observables. Our method provides a general framework to analyze the expressive abilities and restrictions of quantum channel, and also offers the capacity for the design and development of more complex QMLs and QNNs schemes.

We believe this protocol can be generalized to other kinds of QML architectures, such as deep quantum neural networks~\cite{Beer2020}, quantum recurrent neural networks~\cite{PhysRevA.103.052414}, quantum convolutional neural networks~\cite{Cong2019} and quantum autoencoders~\cite{PhysRevLett.124.130502,huang2024optimizedquantumautoencoder}. It can also be extended from expectation value measurements to other quantum resources measure~\cite{RevModPhys.81.865,RevModPhys.80.517,GUHNE20091,RevModPhys.89.041003,HU20181,RevModPhys.74.197}. Furthermore, as the important application of quantum mechanics, QML and QNNs have shown the abilities to surpass classical computation. We believe that QML and QNNs will have substantial potential to impact quantum mechanics and quantum information science in future researches.

\begin{acknowledgments}
This work is supported by National Key Research and Development Program of China (Grants No.2021YFA0718302 and No. 2021YFA1402104).
\end{acknowledgments}

\bibliography{bib.bib}

\onecolumngrid
\newpage

\appendix

\section{Method of determining maximal $\alpha$}\label{app_maximal_alpha_method}

In this section we give the details of the method to compute the maximal $\alpha$. We will first introduce the framework of this approach. Then we prove that it can always obtain the optimal $\alpha_{\text{max}}$ using this method. Next we give some examples and explain each step in details. We also introduce the semidefinite programming formulation of the problem.

Recall our proof of the Theorem 1, for any two traceless Hermitian operators $O_1$ and $O_2$ with maximum absolute value of eigenvalues not exceeding $1$, there exists a unital quantum channel $\Phi$ and a number $\alpha\in \intervaloc{0,1}$,
such that $\alpha O_1 = \Phi^{\dagger}(Z_1)$ and $\alpha O_2 = \Phi^{\dagger}(Z_2)$, where $Z_j$ is the Pauli-$Z$ operator on the $j$-th qubit. 
The Choi representation of the adjoint channel $\Phi^{\dagger}$ can be constructed as
\begin{align}
J(\Phi_{\alpha}^{\dagger}) & = 4^n \Phi_{\alpha}^{\dagger}\otimes \mathbb{I} (|\phi\rangle\langle\phi|) \nonumber\\
& = \sum_j \Phi_{\alpha}^{\dagger}(M_j) \otimes M_j \nonumber\\
& = I \otimes I + \alpha O_1 \otimes Z_1 + \alpha O_2 \otimes Z_2,
\end{align}
where the $n$-qubit maximally entangled state $|\phi\rangle = \sum_{j=0}^{2^n-1} |jj\rangle /\sqrt{2^n}$ is defined on the product Hilbert space $\mathcal{H}_{\mathcal{Y}}\otimes \mathcal{H}_{\mathcal{X}}$, and its density matrix is $|\phi\rangle\langle\phi|=\sum_j M_j\otimes M_j/4^n$. Here we only focus on the existence of such $\Phi^{\dagger}$, so we assume $\Phi_{\alpha}^{\dagger}(M_j)=0$, where $M_j$ is any Pauli group element expect $M_0=I$, $M_1=Z_1$, and $M_2=Z_2$.

Next we focus on finding the maximal $\alpha$. For a general unital channel $\Phi_{\alpha\boldsymbol{\beta}}$ satisfying $\Phi_{\alpha\boldsymbol{\beta}}(Z_1) = O_1$ and $\Phi_{\alpha\boldsymbol{\beta}}(Z_2) = O_2$, we need to specify that for $j\notin\{0,1,2\}$
\begin{equation}
 \Phi_{\alpha\boldsymbol{\beta}}^{\dagger}(M_j) = \sum_{k\neq 0} \beta_{k j} M_k. 
\end{equation}

Then the Choi representation of $\Phi^{\dagger}$
\begin{align}
  J(\Phi_{\alpha\boldsymbol{\beta}}^{\dagger}) =   I \otimes I + \alpha \sum_{i=1}^2 O_i \otimes Z_i +  \sum_{j\notin\{0,1,2\}}^{k\neq 0} \beta_{k j} M_k \otimes M_j.
\end{align}

The task is to find an optimal $\Phi^{\dagger}$ satisfying
\begin{equation}\label{a_max_definition_app}
  \alpha_{\text{max}} = \{\max_{\boldsymbol{\beta}} \alpha: J(\Phi_{\alpha\boldsymbol{\beta}}^{\dagger}) \ge 0\}.
\end{equation}

\subsection{Framework of the Method}

To find $\alpha_{\text{max}}$, we design the following iterative algorithm. First, let us define the Choi representation of the $m$-th iteration as 
\begin{align}
    J(\Phi_{\alpha\boldsymbol{\beta}}^{(m)\dagger}) =  I \otimes I 
    + \alpha^{(m)} \sum_{i=1}^2 O_i \otimes Z_i
    +  \sum_{j\notin\{0,1,2\}}^{k\neq 0} \beta^{(m)}_{k j} M_k \otimes M_j.
\end{align}
Our strategy is to choose suitable $\beta_{kj}^{(m)}$ such that 
$\alpha^{(m)}\ge \alpha^{(m-1)}$. 
Initially, $m=0$, we take $\beta^{(0)}_{kj}=0$, and 
\begin{align}
    J(\Phi^{(0)\dagger}_{\alpha\boldsymbol{\beta}})=J(\Phi^{\dagger}_{\alpha}). 
\end{align}
In the $m$-th step, we use 
\begin{align}
    \lambda_{\text{min}}(J(\Phi^{(m)\dagger}_{\alpha\boldsymbol{\beta}}))=0
\end{align}
to determine 
$\alpha^{(m)}$, 
and then solve the ground state of $J(\Phi^{(m)\dagger}_{\alpha\boldsymbol{\beta}})$: 
\begin{align}
    \rho^{(m)}_G=\frac{1}{n^{(m)}_d}\sum_{i=1}^{n^{(m)}_d}|g^{(m)}_i\rangle\langle g^{(m)}_i|,
\end{align}
where $n^{(m)}_d$ is the ground state degeneracy. 

Next calculate 
\begin{align}
    M^{m}_{kji}\equiv\langle g^{(m)}_i|M_k\otimes M_j|g^{(m)}_i\rangle. 
\end{align}
If for any $i$, $M^{(m)}_{kji}$ have the same sign $s(mkj)\in\{-1,0,+1\}$, then we take the increment $\Delta^{(m)}_{kj}s(mjk)\ge0$, and 
\begin{align}
    \beta^{(m+1)}_{kj}=\beta^{(m)}_{kj}+\Delta^{(m)}_{kj}. 
\end{align}
Otherwise, $\Delta^{(m)}_{kj}=0$. The Choi representation in the $(m+1)$-th iteration is
\begin{align}
    J(\Phi_{\alpha\boldsymbol{\beta}}^{(m+1)\dagger}) = J(\Phi_{\alpha\boldsymbol{\beta}}^{(m)\dagger}) + \sum_{j\notin\{0,1,2\}}^{k\neq 0} \Delta^{(m)}_{kj} M_k\otimes M_j.
\end{align}
When the step $m$ becomes larger, 
$\alpha^{(m)}$ 
limits to $\alpha_{\text{max}}$.

If the ground state is not degenerated, $n^{(m)}_d = 1$, there is only one ground state of $J(\Phi^{(m)\dagger}_{\alpha\boldsymbol{\beta}})$: 
\begin{align}
    \rho^{(m)}_G=|g^{(m)}\rangle\langle g^{(m)}|.
\end{align}
We calculate 
\begin{align}
    M^{(m)}_{kj}\equiv\langle g^{(m)}|M_k\otimes M_j|g^{(m)}\rangle,
\end{align}
and we only need to find all $M_k\otimes M_j$ with $M^{(m)}_{kj}\neq 0$. The increment is $\Delta^{(m)}_{kj}s(mjk)\ge0$. 

\subsection{Proof of the Method}
We now prove that this method will always find a maximum $\alpha_{\text{max}}$ by perturbation theory. We focus on the non-degenerate case, and it is natural to generalize it to the degenerate cases. The total procedure of the $(m+1)$-th iteration is composed of two steps. The first step is to add a perturbation $\sum_{j\notin\{0,1,2\}}^{k\neq 0} \Delta^{(m)}_{kj} M_k\otimes M_j$ on the Choi representation in $m$-th iteration $J(\Phi_{\alpha\boldsymbol{\beta}}^{(m)\dagger})$, which is  
\begin{align}
    J(\Phi_{\alpha_0\boldsymbol{\beta}}^{(m+1)\dagger}) = J(\Phi_{\alpha_0\boldsymbol{\beta}}^{(m)\dagger}) + \sum_{j\notin\{0,1,2\}}^{k\neq 0} \Delta^{(m)}_{kj} M_k\otimes M_j.
\end{align}
Denote the ground state and excitation states of the $J(\Phi_{\alpha_0\boldsymbol{\beta}}^{(m)\dagger})$ as $|g^{(m)}\rangle$ and $|e_i^{(m)}\rangle$. We have $\langle g|J(\Phi_{\alpha_0\boldsymbol{\beta}}^{(m)\dagger})|g\rangle = 0$ and $\langle e_i|J(\Phi_{\alpha_0\boldsymbol{\beta}}^{(m)\dagger})|e_i\rangle > 0$. The first order perturbation of the ground state energy is 
\begin{align}
    E_{g_0}^{(m+1)} &= 0 + \langle g|\sum_{j\notin\{0,1,2\}}^{k\neq 0} \Delta^{(m)}_{kj} M_k\otimes M_j|g\rangle.
\end{align}
And we can always choose some small enough $\Delta^{(m)}_{kj}$ to make other excitation energies still greater than $0$. 
And we can get a positive definite diagonalized Hamiltonian in the basis of $J(\Phi_{\alpha_0\boldsymbol{\beta}}^{(m+1)\dagger})$.

The second step of the $(m+1)$-th iteration is to add another perturbation $\delta\alpha (O_1 \otimes Z_1 +  O_2 \otimes Z_2)$ on the Hamiltonian $J(\Phi_{\alpha_0\boldsymbol{\beta}}^{(m+1)\dagger})$, and get $\delta\alpha$ by solving the equation 
\begin{align}
    \lambda_{\text{min}}\bigg[ J(\Phi_{\alpha_0\boldsymbol{\beta}}^{(m+1)\dagger}) + \delta\alpha (O_1 \otimes Z_1 +  O_2 \otimes Z_2) \bigg]=0.
\end{align}
We can simplify the function above into the following function. The ground state of $J(\Phi_{\alpha_0\boldsymbol{\beta}}^{(m+1)\dagger})$ is $|g_0^{(m+1)}\rangle$ and the ground state energy of the first order perturbation is
\begin{align}
    E_{g}^{(m+1)} &= E_{g_0}^{(m+1)} + \langle g_0^{(m+1)}|\delta\alpha (O_1 \otimes Z_1 +  O_2 \otimes Z_2)|g_0^{(m+1)}\rangle.
\end{align}
We can always find a $\delta\alpha>0$ that satisfies the function $E_{g}^{(m+1)} = 0$. Thus completes the proof.

\subsection{Some Examples of the Method}
Next we take three examples, the first is that $O_1=X_1X_2$ and $O_2=(Z_1+Z_2)/2$, the second is $O_2$ and $O_3=\frac{1}{3}(X_1X_2 - X_1Z_2 + I_1Y_2)$, and the third is two random observables. In the example of $O_1$ and $O_2$, the Choi representation in the 
\begin{align}
    J(\Phi_{\alpha\boldsymbol{\beta}}^{(0)\dagger}) =  I  + \frac{1}{2}\alpha ZIZI + \frac{1}{2}\alpha IZZI + \alpha XXIZ. 
\end{align}

The ground state of $J(\Phi_{\alpha\boldsymbol{\beta}}^{(0)\dagger})$ is $4$-fold degenerated, $\rho^{(0)}_G=\frac{1}{4}\sum_{i=1}^{4}|g^{(0)}_i\rangle\langle g^{(0)}_i|$. By calculating all $M^{(0)}_{kji}$'s, we found that there does not exist an $M_k\otimes M_j$ such that $M^{(0)}_{kji}$'s have the same sign $s(mkj)\in\{-1,0,+1\}$ for all $i$. Thus in the case of $O_1$ and $O_2$, we obtain the $\alpha_{max}$ by solving the smallest eigenvalue of $J(\Phi_{\alpha\boldsymbol{\beta}}^{(0)\dagger})$ is equal to $0$, i.e., $\lambda_{\text{min}}(J(\Phi_{\alpha\boldsymbol{\beta}}^{(0)\dagger})) = 0$. And we get $\alpha_{\text{max}} = \frac{\sqrt{2}}{2}$.

In the example of $O_2$ and $O_3$, the Choi representation is
\begin{align}
    J(\Phi_{\alpha\boldsymbol{\beta}}^{(0)\dagger}) =  I  + \frac{1}{2}\alpha IZZI + \frac{1}{2}\alpha ZIZI + \frac{1}{3}\alpha IYIZ + \frac{1}{3}\alpha XXIZ -\frac{1}{3}\alpha XZIZ. 
\end{align}

The ground state of the $J(\Phi_{\alpha\boldsymbol{\beta}}^{(0)\dagger})$ is also $4$-fold degenerated. By calculating all $M^{(0)}_{kji}$'s, we found that there are two terms, $XI\otimes ZZ$ and $ZY\otimes ZZ$, satisfying that $M^{(0)}_{kji}$'s have the same sign. $s(0,XI,ZZ) = -1$ and $s(0,ZY,ZZ) = 1$ for $i=1,2,3,4$. In the real implementation, we found that the terms $M^{(m)}_{kji}$ with the same sign are invariant for all $m$-th iterations, i.e., there are only $XI\otimes ZZ$ and $ZY\otimes ZZ$ that have the same sign of the terms $M^{(m)}_{kji}$ for all $m$. So we only need to find the first $M^{(0)}_{kji}$'s with the same sign $s(0kj)$ the and set the $\beta_{kj}$'s as variational parameters and other $\beta_{k'j'}$'s are set to be $0$.

Thus the Choi representation in the $m$-th iteration is 
\begin{align}
    J(\Phi_{\alpha\boldsymbol{\beta}}^{(m)\dagger}) &= J(\Phi_{\alpha\boldsymbol{\beta}}^{(0)\dagger}) + \beta_{XI,ZZ} XI\otimes ZZ + \beta_{ZY,ZZ} ZY\otimes ZZ \nonumber \\
     &=   I  + \frac{1}{2}\alpha IZZI + \frac{1}{2}\alpha ZIZI + \frac{1}{3}\alpha IYIZ + \frac{1}{3}\alpha XXIZ -\frac{1}{3}\alpha XZIZ \nonumber\\
     &+ \beta_{XI,ZZ} XI\otimes ZZ + \beta_{ZY,ZZ} ZY\otimes ZZ.
\end{align}
By the above iteration method, the $\alpha_{\text{max}} \approx 0.927$, and $\beta_{XI,ZZ} \approx -0.062 $, $\beta_{ZY,ZZ} \approx 0.062 $. If we solve the function $\lambda_{\text{min}}[J(\Phi_{\alpha\boldsymbol{\beta}}^{(0)\dagger})] = 0$, we can get $\alpha_0 = \sqrt{6/(5+\sqrt{17})} \approx 0.811$, which is less than the $\alpha_{\text{max}}$.

In the third example, these two observables are two random observables $O^{\text{random}}_1$ and $O^{\text{random}}_2$. In this case the ground state of the Choi representation is usually not degenerated. So there is only one ground state $\rho^{(0)}_G = |g^{(0)}\rangle\langle g^{(0)}|$. Usually all operators $M_k\otimes M_j$ for all $k,j$ satisfy $M^{(0)}_{kj}=\langle g^{(0)}|M_k\otimes M_j|g^{(0)}\rangle \neq 0$. So all $\beta_{kj}$'s are set as variational parameters, and the Choi representation in the $m$-th iteration is 
\begin{align}
    J(\Phi_{\alpha\boldsymbol{\beta}}^{(m)\dagger}) = J(\Phi_{\alpha\boldsymbol{\beta}}^{(0)\dagger}) + \sum_{j\notin\{0,1,2\}}^{k\neq 0} \beta^{(m)}_{kj} M_k\otimes M_j.
\end{align}
In real implementation of the third example, we found that $M^{(0)}_{kji} = \langle g^{(m)}_i|M_k\otimes M_j|g^{(m)}_i\rangle \neq 0$ is possible only for $M_j = ZZ$. So the above equation can be simplified to 
\begin{align}
    J(\Phi_{\alpha\boldsymbol{\beta}}^{(m)\dagger}) = J(\Phi_{\alpha\boldsymbol{\beta}}^{(0)\dagger}) + \sum_{j\notin\{0,1,2\}} \beta^{(m)}_{k,ZZ} M_k\otimes ZZ.
\end{align}

In real implementation, we can also use another optimization method to get the $\alpha_{\text{max}}$. We first get the $\alpha_0$ by solving the function $\lambda_{\text{min}}[J(\Phi_{\alpha\boldsymbol{\beta}}^{(0)\dagger})] = 0$. And then we add the variational parameters $\beta_{kj}$'s into the Choi representation. At each iteration, we add an increment $\alpha \rightarrow \alpha+ \delta \alpha$, and optimize $\beta_{kj}$'s with the loss function being the negative minimal eigenvalue of $J(\Phi_{\alpha\boldsymbol{\beta}}^{(m)\dagger})$. When $\alpha^{(m)} < \alpha_{\text{max}}$, we can always get a set of optimized $\beta_{kj}$'s with the positive minimal eigenvalue of $J(\Phi_{\alpha\boldsymbol{\beta}}^{(m)\dagger})$. When $\alpha^{(m)} = \alpha_{\text{max}}$, we can get a set of optimized $\beta_{kj}$'s with the minimal eigenvalue of $J(\Phi_{\alpha\boldsymbol{\beta}}^{(m)\dagger})$ equal to $0$. And when $\alpha^{(m)} > \alpha_{\text{max}}$, we cannot find any set of optimized $\beta_{kj}$'s with the minimal eigenvalue of $J(\Phi_{\alpha\boldsymbol{\beta}}^{(m)\dagger}) $ greater than $0$. By this method we can determine the value of $\alpha_{\text{max}}$.

\subsection{Formulation in Semidefinite Programming Problem}
The task is to find an optimal $\Phi^{\dagger}$ is Eq.~\eqref{a_max_definition_app},
\begin{equation}
  \alpha_{\text{max}} = \{\max_{\boldsymbol{\beta}} \alpha: J(\Phi_{\alpha\boldsymbol{\beta}}^{\dagger}) \ge 0\},
\end{equation}
which can be expressed as a problem in semidefinite programming~\cite{gartner2012approximation}. 
The definition of a semidefinite programming in equational form is an optimization problem:
\begin{align}
    &\text{Maximize} \sum_{i,j=1}^{n}c_{ij}x_{ij},\nonumber\\
    &\text{subject to} \sum_{i,j=1}^{n}a_{ijk}x_{ij} = b_k,~k = 1,...,m, \nonumber\\
    &~~~~~~~~~~~~~~X\geq0,
\end{align}
where the $x_{ij}$ is the matrix element of a Hermitian matrix $X$, and $c_{ij}$, $a_{ijk}$ and $b_k$ are real coefficients.
And it can be written in a more compact form:
\begin{align}
    &\text{Maximize}~C\bullet X,\nonumber\\
    &\text{subject to}~A_k\bullet X=b_k,~k = 1,...,m,\nonumber\\
    &~~~~~~~~~~~~~~X\geq0,
\end{align}
where $C = (c_{ij})_{i,j=1}^n$, $A_k = (a_{ijk})_{i,j=1}^n$, and the notation $C\bullet X$ is defined as $C\bullet X = \sum_{i,j=1}^{n}c_{ij}x_{ij}$.

The optimization problem of Eq.~\eqref{a_max_definition_app} can be reformulated as 
\begin{align}\label{a_max_semidefinite_programming_app}
    &\text{Maximize}~\tr\big[J(\Phi_{\alpha\boldsymbol{\beta}}^{\dagger})\sum_{i=1}^2O_i\otimes Z_i\big]
    /\tr\big[(\sum_{i=1}^2O_i\otimes Z_i)^2\big],\nonumber\\
    &\text{subject to}~\tr\big[J(\Phi_{\alpha\boldsymbol{\beta}}^{\dagger})\big]=4^n,\nonumber\\
    &~~~~~~~~~~~~~~\tr\big[J(\Phi_{\alpha\boldsymbol{\beta}}^{\dagger})(I\otimes M_j)\big]=0,~j=1,...,4^n-1,\nonumber\\
    &~~~~~~~~~~~~~~\tr\big[J(\Phi_{\alpha\boldsymbol{\beta}}^{\dagger})(M_k\otimes I)\big]=0,~k=1,...,4^n-1,\nonumber\\
    &~~~~~~~~~~~~~~\tr\big[J(\Phi_{\alpha\boldsymbol{\beta}}^{\dagger})(\overline{O}_{1,i}\otimes Z_1)\big]=0,~i=1,...,4^n-1,\nonumber\\
    &~~~~~~~~~~~~~~\tr\big[J(\Phi_{\alpha\boldsymbol{\beta}}^{\dagger})(\overline{O}_{2,i}\otimes Z_2)\big]=0,~i=1,...,4^n-1,\nonumber\\
    &~~~~~~~~~~~~~~\tr\big[J(\Phi_{\alpha\boldsymbol{\beta}}^{\dagger})(a_1O_1\otimes Z_1-b_1O_2\otimes Z_2)\big]=0,\nonumber\\
    &~~~~~~~~~~~~~~J(\Phi_{\alpha\boldsymbol{\beta}}^{\dagger})\geq0,
\end{align}
where $O_i$ and $\overline{O}_{i,j}$ form a set of basis in the operator space, with $\tr(O_i\overline{O}_{i,j}) = 0$, for $j=1,...,4^n-1$ and $i=1,2$.
The coefficients $a_1$ and $b_1$ is obtained by solving the equation $\tr[(O_1\otimes Z_1+O_2\otimes Z_2)(a_1O_1\otimes Z_1-b_1O_2\otimes Z_2)] = 0$.

We can rewrite the Eq.~\eqref{a_max_semidefinite_programming_app} with coefficient matrix, 
\begin{align}\label{a_max_semidefinite_programming_matrix_app}
    &\text{Maximize}~C\bullet J(\Phi_{\alpha\boldsymbol{\beta}}^{\dagger}),\nonumber\\
    &\text{subject to}~I\bullet J(\Phi_{\alpha\boldsymbol{\beta}}^{\dagger})=4^n,\nonumber\\
    &~~~~~~~~~~~~~~(I\otimes M_j)^{T}\bullet J(\Phi_{\alpha\boldsymbol{\beta}}^{\dagger})=0,~j=1,...,4^n-1,\nonumber\\
    &~~~~~~~~~~~~~~(M_k\otimes I)^{T}\bullet J(\Phi_{\alpha\boldsymbol{\beta}}^{\dagger})=0,~k=1,...,4^n-1,\nonumber\\
    &~~~~~~~~~~~~~~(\overline{O}_{1,i}\otimes Z_1)^{T}\bullet J(\Phi_{\alpha\boldsymbol{\beta}}^{\dagger})=0,~i=1,...,4^n-1,\nonumber\\
    &~~~~~~~~~~~~~~(\overline{O}_{2,i}\otimes Z_2)^{T}\bullet J(\Phi_{\alpha\boldsymbol{\beta}}^{\dagger})=0,~i=1,...,4^n-1,\nonumber\\
    &~~~~~~~~~~~~~~(a_1O_1\otimes Z_1-b_1O_2\otimes Z_2)^{T}\bullet J(\Phi_{\alpha\boldsymbol{\beta}}^{\dagger})=0,\nonumber\\
    &~~~~~~~~~~~~~~J(\Phi_{\alpha\boldsymbol{\beta}}^{\dagger})\geq0,
\end{align}
with $C^{T} = \big(\sum_{i=1}^2O_i\otimes Z_i\big)/\tr\big[(\sum_{i=1}^2O_i\otimes Z_i)^2\big]$. 
Thus optimization problem of Eq.~\eqref{a_max_definition_app} can be expressed as a problem in semidefinite programming as Eq.~\eqref{a_max_semidefinite_programming_matrix_app}.

\section{Proof of Theorem 2}\label{app_proof_number}

In this section we give the proof of the Theorem 2 in the main text. 

Suppose we have obtained a well-trained optimized unital channel with $\alpha_{\text{max}}$, we will use it as a QNN to measure the expectation values of two 2-qubit observables $O_1$ and $O_2$ on a state $\rho$.
For direct projective measurements without QNNs, each measurement returns an eigenvalue, denoted as $\hat{o}_{1}$ for $O_1$ and $\hat{o}_{2}$ for $O_2$. In contrast, when applying a QNN, the Pauli-$Z$ measurements on the output qubits return outcomes $\hat{z}_{1},\hat{z}_{2} \in \{+1,-1\}$.

Define $N_O = N_{O_1} + N_{O_2}$ as the total number of copies of state $\rho$ prepared for projective measurement of $O_{1}$ and $O_{2}$, where $N_{O_j}$ is the sample size allocated to each observable. Let $N_Z$ denotes the number of copies of $\rho$ used for Pauli-$Z$ measurements with QNN. The estimate of the $j$-th observable expectation value is $\hat{O}_{j} = \sum_{i=1}^{N_{O_{j}}} \hat{o}_{j,i} /{N_{O_{j}}}$ for projective measurements, and $\hat{Z}_{j} = \sum_{i=1}^{N_{Z}} \hat{z}_{j,i} /(\alpha_{\text{max}}{N_{Z}})$ for Pauli $Z$ measurements with QNNs, where $\hat{o}_{j,i}$ ($\hat{z}_{j,i}$) is an eigenvalue of $O_j$ ($Z_j$) returned in the $i$-th projective measurement of $O_j$ ($Z_j$). When measurement numbers $N_O$ and $N_Z$ approach infinity, the limits of $\hat{Z}_j$ and $\hat{O}_j$ are denoted as $\mathbb{E}[\hat{z}_{j}/\alpha_{\text{max}}]$ and $\mathbb{E}[\hat{o}_{j}]$ respectively for $j \in \{1,2\}$. And $\mathbb{E}[\hat{z}_{j}/\alpha_{\text{max}}]=\mathbb{E}[\hat{o}_{j}] = \tr(\rho O_j)$. The variance of a random variable $\hat{o}$ is $\text{Var} [\hat{o}] = \mathbb{E}[\hat{o}^2] - \mathbb{E}[\hat{o}]^2$.
We will use the notation $\alpha$ instead of $\alpha_{\text{max}}$ in the following proof for simplicity.

\setcounter{theorem}{1}
\begin{theorem}\label{thm_complexity_app}
    (i) If we only focus on measuring one observable, the variance of Pauli-$Z$ measurement is always greater than direct projective measurements, 
    \begin{align}
        \text{Var} [\hat{z}_j/\alpha] \geq \text{Var} [\hat{o}_j],~j=1,2.
    \end{align}
    (ii) Suppose by preparing $N_O$ copies of $\rho$ and taking projective measurements, with high probability, it achieves
    \begin{align}
        \big|\hat{O}_{j}-\mathbb{E}[\hat{o}_j]\big|\leq \epsilon, ~j = 1,2.
    \end{align}
    Then, by preparing $N_Z = \mathcal{O}\big(\lambda N_O\big)$ copies of $\rho$, where
    \begin{align}\label{N_Z_without_Haar_app}
        \lambda = \frac{ 1  - \alpha^2 \min\{\mathbb{E}[\hat{o}_1]^2,\mathbb{E}[\hat{o}_2]^2\}}{\alpha^2[\text{Var}[\hat{o}_1] + \text{Var}[\hat{o}_2]]},
    \end{align}
    with high probability, the estimator $\hat{Z}_{j}$ achieves 
    \begin{align}
        \big|\hat{Z}_{j}-\mathbb{E}[\hat{o}_j]\big|\leq \epsilon, ~j = 1,2.
    \end{align}
   
   If $\rho = |\psi \rangle\langle\psi|$ is a pure state, the average number of copies under Haar measure becomes $N_Z = \mathcal{O}\big(\lambda_{H}N_O\big)$, 
   where 
   \begin{align}
       \lambda_{H} = \frac{d(d+1)-\alpha^2 \min \{\tr(O_1^2),\tr(O_2^2)\}}{d\alpha^2[\tr(O_1^2)+\tr(O_2^2)]},
   \end{align}
   and $d$ is the dimension of the Hilbert space.
\end{theorem}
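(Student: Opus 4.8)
The plan is to reduce everything to explicit variance computations followed by a standard concentration-plus-allocation argument, and finally a Haar-average evaluation. For part (i), I would first exploit that the QNN outcome $\hat{z}_j$ takes values in $\{+1,-1\}$, so $\hat{z}_j^2 = 1$ deterministically and $\mathbb{E}[\hat{z}_j^2] = 1$. Combined with the unbiasedness relation $\mathbb{E}[\hat{z}_j/\alpha] = \mathbb{E}[\hat{o}_j] = \tr(\rho O_j)$ from Eq.~\eqref{eq1}, this gives $\text{Var}[\hat{z}_j/\alpha] = 1/\alpha^2 - \tr(\rho O_j)^2$. On the projective side, the hypothesis that every eigenvalue of $O_j$ has modulus at most $1$ forces $\hat{o}_j^2 \le 1$, hence $\mathbb{E}[\hat{o}_j^2]\le 1$ and $\text{Var}[\hat{o}_j] = \mathbb{E}[\hat{o}_j^2] - \tr(\rho O_j)^2 \le 1 - \tr(\rho O_j)^2$. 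Since $\alpha\in\intervaloc{0,1}$ implies $1/\alpha^2\ge 1$, the two estimates combine into $\text{Var}[\hat{z}_j/\alpha] = 1/\alpha^2 - \tr(\rho O_j)^2 \ge 1 - \tr(\rho O_j)^2 \ge \text{Var}[\hat{o}_j]$, which is exactly the claim.

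For part (ii), I would fix the sample sizes through a Chebyshev-type bound: to guarantee $|\hat{X}-\mathbb{E}[X]|\le \epsilon$ with high probability from a sample mean over $N$ copies, it suffices that $N$ scale as $\text{Var}[X]/\epsilon^2$, up to a constant absorbing the target failure probability and a union bound over the two observables. For the projective scheme the two incompatible observables must be measured on disjoint copies, so allocating $N_{O_j}\propto \text{Var}[\hat{o}_j]/\epsilon^2$ minimizes the total and yields $N_O = \mathcal{O}\big((\text{Var}[\hat{o}_1]+\text{Var}[\hat{o}_2])/\epsilon^2\big)$. For the QNN scheme a single output copy furnishes both $\hat{z}_1$ and $\hat{z}_2$ simultaneously, since they act on distinct qubits, so one common $N_Z$ must control both errors, forcing $N_Z = \mathcal{O}\big(\max_j \text{Var}[\hat{z}_j/\alpha]/\epsilon^2\big)$. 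Using the part-(i) formula, $\max_j \text{Var}[\hat{z}_j/\alpha] = 1/\alpha^2 - \min_j \mathbb{E}[\hat{o}_j]^2$, because the larger variance corresponds to the smaller squared mean; this is precisely where the minimum in Eq.~\eqref{N_Z_without_Haar_app} originates. Taking the ratio $\lambda = N_Z/N_O$ and clearing a factor $\alpha^2$ in numerator and denominator reproduces Eq.~\eqref{N_Z_without_Haar}.

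For the pure-state Haar refinement, I would replace each state-dependent variance by its average over Haar-random $|\psi\rangle$ using the first two moments $\mathbb{E}[\langle\psi|A|\psi\rangle]=\tr(A)/d$ and $\mathbb{E}[\langle\psi|A|\psi\rangle\langle\psi|B|\psi\rangle] = (\tr A\,\tr B + \tr(AB))/(d(d+1))$. Since $O_j$ is traceless, these give $\mathbb{E}[\mathbb{E}[\hat{o}_j]^2] = \tr(O_j^2)/(d(d+1))$ and $\mathbb{E}[\text{Var}[\hat{o}_j]] = \tr(O_j^2)/d - \tr(O_j^2)/(d(d+1)) = \tr(O_j^2)/(d+1)$. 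Substituting the averaged numerator $1/\alpha^2 - \min_j\tr(O_j^2)/(d(d+1))$ and denominator $(\tr(O_1^2)+\tr(O_2^2))/(d+1)$ into $\lambda$, then clearing common factors of $d$, $d+1$, and $\alpha^2$, yields $\lambda_H$ in the form of Eq.~\eqref{N_Z_with_Haar}.

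The step I expect to be the main obstacle is the second-moment and allocation bookkeeping of part (ii), in two places. First, making the ``with high probability'' statements rigorous requires a concentration inequality (Chebyshev already suffices for the stated $\mathcal{O}$-scaling, while a Bernstein or Hoeffding bound would tighten constants) together with a union bound over the two observables, and one must justify that the matching $N_O$ refers to the optimal projective allocation $N_{O_j}\propto\text{Var}[\hat{o}_j]$. Second, the Haar step technically involves $\mathbb{E}[\min_j\mathbb{E}[\hat{o}_j]^2]$ rather than $\min_j\mathbb{E}[\mathbb{E}[\hat{o}_j]^2]$; these coincide only approximately, so the derivation of $\lambda_H$ should be read as an averaged, leading-order estimate, and I would state it at that level of precision rather than as an exact identity.
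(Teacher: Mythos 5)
Your proposal is correct and follows essentially the same route as the paper's proof: the same explicit variance identities $\text{Var}[\hat{z}_j/\alpha]=1/\alpha^2-\tr(\rho O_j)^2$ versus $\text{Var}[\hat{o}_j]=\tr(\rho O_j^2)-\tr(\rho O_j)^2$ for part (i), the same sum-versus-max allocation with a concentration bound (the paper uses Bernstein where you invoke Chebyshev, which is immaterial at the stated $\mathcal{O}$-scaling), and the same Haar first- and second-moment formulas for $\lambda_H$. Your closing caveat that the Haar step exchanges $\mathbb{E}[\min]$ with $\min[\mathbb{E}]$ is a real subtlety that the paper's own derivation passes over silently, so flagging $\lambda_H$ as a leading-order averaged estimate is, if anything, more careful than the original.
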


\begin{proof}
    $(i)$ We evaluate the variance of $\hat{o}_1$ and $\hat{z}_1$.
    \begin{align}
    \text{Var} [\hat{o}_1] &= \mathbb{E}[\hat{o}_1^2] - \mathbb{E}[\hat{o}_1]^2 = 
    \tr(\rho O_1^2) -\tr(\rho O_1)^2.
    \end{align}
    \begin{align}
    \text{Var} [\hat{z}_1/\alpha] &= \frac{1}{\alpha^2}(\mathbb{E}[\hat{z}_1^2] - \mathbb{E}[\hat{z}_1]^2) \nonumber \\
            &= \frac{1}{\alpha^2}[(+1)^2 \langle 0|\rho_1|0\rangle + (-1)^2 \langle 1|\rho_1|1\rangle - \alpha^2(\rho O_1)^2] \nonumber \\
            &= 1/\alpha^2 - \tr(\rho O_1)^2,
\end{align}
where $\rho_1$ is the reduced density matrix on the first qubit, and $\mathbb{E}[\hat{z}_1/\alpha] = \mathbb{E}[\hat{o}_1]$.
Note that, $1/\alpha^2 \geq 1 \geq \tr (\rho O_1^2)$ is always satisfied for any $O_1$ with eigenvalues whose absolute values do not exceed $1$. Thus
\begin{align}\label{2_bit_information_inequality}
    \text{Var} [\hat{z}_1/\alpha]>\text{Var} [\hat{o}_1]
\end{align}
is always satisfied.

$(ii)$ Note that $\text{Var} [\hat{o}_1] \leq 2 \mathbb{E}[O_1] \leq 2$, and $|\hat{o}_1|^2\leq 1$. Use Bernstein inequality~\cite{Duchi} and we get
\begin{align}
    \text{Pr}\bigg[\frac{1}{N_{O_{1}}}\sum_{i=1}^{N_{O_{1}}} \hat{o}_{1,i} - \mathbb{E}[\hat{o}_1] \geq \varepsilon\bigg] \leq 
    2\exp [\frac{-N_{O_1}} {\varepsilon^2}{2(\mathbb{E}[\hat{o}_1^2] - \mathbb{E}[\hat{o}_1]^2) +\frac{4}{3}\mathbb{E}[\hat{o}_1] \varepsilon}], 
\end{align}
for $\delta\in (0,1)$, we get
\begin{align}
    N_{O_1} \geq \frac{2\ln (2/\delta)}{\varepsilon^2} \bigg[\mathbb{E}[\hat{o}_1^2] - \mathbb{E}[\hat{o}_1]^2 + \frac{2}{3} \mathbb{E}[\hat{o}_1] \varepsilon\bigg].
\end{align}

To measurement two observables $O_1$ and $O_2$, the total number of copies $\rho$ is $N_O = N_{O_1} + N_{O_2}$. Thus preparing 
\begin{align}\label{N_O}
    N_O \geq 
    \frac{2\ln (2/\delta)}{\varepsilon^2} \bigg[\mathbb{E}[\hat{o}_1^2] + \mathbb{E}[\hat{o}_2^2] - \mathbb{E}[\hat{o}_1]^2 -\mathbb{E}[\hat{o}_2]^2 + \frac{2}{3} (\mathbb{E}[\hat{o}_1] + \mathbb{E}[\hat{o}_2])\varepsilon\bigg]
\end{align}
copies implies
\begin{align}
        \big|\hat{O}_{j}-\tr(\rho O_j)\big|\leq \varepsilon, ~j = 1,2, 
\end{align}
$\text{with probability at least}~ 1-\delta$.

For the Pauli-$Z$ measurement with QNN, note that $|\hat{z}_1/\alpha - \mathbb{E}[\hat{o}_1]| \leq 1/\alpha + 1$, and $|\hat{z}_1|^2\leq 1$. Use Bernstein inequality we get
\begin{align}
    \text{Pr}\bigg[\frac{1}{\alpha N_{Z_{1}}}\sum_{i=1}^{N_{Z_{1}}} \hat{z}_{1,i} - \mathbb{E}[\hat{o}_1] \geq \varepsilon\bigg] \leq 
    2\exp [\frac{-N_{Z_1}} {\varepsilon^2}{2(\frac{1}{\alpha^2} - \mathbb{E}[\hat{o}_1]^2) +\frac{4}{3}\frac{\alpha+1}{\alpha} \varepsilon}], 
\end{align}
for $\delta\in (0,1)$, we get
\begin{align}
    N_{Z_1} \geq \frac{2\ln (2/\delta)}{\varepsilon^2} \bigg[\frac{1}{\alpha^2} - \mathbb{E}[\hat{o}_1]^2 +\frac{4}{3}\frac{\alpha+1}{\alpha} \varepsilon\bigg].
\end{align}

To measurement two observables $O_1$ and $O_2$ using QNN, the total number of copies $\rho$ is $N_Z = \max \{N_{Z_1}, N_{Z_2}\}$.
Thus preparing 
\begin{align}\label{N_Z}
    N_Z \geq \frac{2\ln (2/\delta)}{\varepsilon^2} \bigg[\frac{1}{\alpha^2} - \min\{\mathbb{E}[\hat{o}_1]^2,\mathbb{E}[\hat{o}_2]^2\} + \frac{2}{3}\frac{\alpha+1}{\alpha} \varepsilon\bigg]
\end{align}
copies implies
\begin{align}
        \big|\hat{Z}_{j}-\tr(\rho O_j)\big|\leq \varepsilon, ~j = 1,2, 
\end{align}
$\text{with probability at least}~ 1-\delta$.

We care about the asymptotic behavior of $\varepsilon$, Eq.~\eqref{N_O} and Eq.~\eqref{N_Z} become 
\begin{align}
    N_O \geq 
    \frac{2\ln (2/\delta)}{\varepsilon^2} \bigg[\mathbb{E}[\hat{o}_1^2] + \mathbb{E}[\hat{o}_2^2] - \mathbb{E}[\hat{o}_1]^2 -\mathbb{E}[\hat{o}_2]^2\bigg],
\end{align}
\begin{align}
    N_Z \geq \frac{2\ln (2/\delta)}{\varepsilon^2} \bigg[\frac{1}{\alpha^2} - \min\{\mathbb{E}[\hat{o}_1]^2,\mathbb{E}[\hat{o}_2]^2\}\bigg].
\end{align}

Thus we get the conclusion that, if we measure two observables $O_1$ and $O_2$ on state $\rho$ with projective measurements, to achieve precision $\varepsilon$ with probability $1-\delta$, we need to prepare 
\begin{align}
   N_O =\mathcal{O} \bigg(\frac{2\ln (2/\delta)}{\varepsilon^2} \bigg(\text{Var}[\hat{o}_1] + \text{Var}[\hat{o}_2]\bigg)\bigg).
\end{align}
If we measure two observables $O_1$ and $O_2$ on state $|\psi\rangle \langle\psi|$ with pauli-$Z$ measurements with QNNs, 
to achieve precision $\varepsilon$ with probability $1-\delta$, we need to prepare 
\begin{align}
   N_Z = \mathcal{O} \bigg(\frac{2\ln (2/\delta)}{\varepsilon^2} \big[\frac{1}{\alpha^2}  - \min\big\{\mathbb{E}[\hat{o}_1]^2,\mathbb{E}[\hat{o}_2]^2\big\}\big]\bigg) = \mathcal{O}\bigg(\frac{ 1  - \alpha^2 \min\{\mathbb{E}[\hat{o}_1]^2,\mathbb{E}[\hat{o}_2]^2\}}{\alpha^2[\text{Var}[\hat{o}_1] + \text{Var}[\hat{o}_2]]}N_O\bigg).
\end{align}

We next consider the average of pure state $ \rho = |\psi\rangle \langle\psi|$ under the Haar measure. By the formula of average of Haar measure on states~\cite{Mele2024introductiontohaar}, 
\begin{align}
    \mathbb{E}_{|\psi\rangle\sim \text{Haar}} \big[|\psi\rangle \langle\psi |\big] &= \frac{1}{d} \mathbb{I},\\
    \mathbb{E}_{|\psi\rangle\sim \text{Haar}} \big[|\psi\rangle \langle\psi |\otimes |\psi\rangle \langle\psi|\big] &= \frac{1}{d(d+1)} (\mathbb{I}+\mathbb{F}),
\end{align}
where $\mathbb{F}$ is the SWAP operator on the two tensor product Hilbert space.
Thus,
\begin{align}
     \mathbb{E}_{|\psi\rangle\sim \text{Haar}} \mathbb{E}[\hat{o}_i]^2 &=  \mathbb{E}_{|\psi\rangle\sim \text{Haar}} \tr(|\psi\rangle \langle\psi | O_i)^2\\
     &=\frac{1}{d(d+1)} \big[ \tr(O_i)^2 + \tr(O_i^2)\big]\\
     &=\frac{1}{d(d+1)} \big[ \tr(O_i^2)\big],\\
     \mathbb{E}_{|\psi\rangle\sim \text{Haar}} \mathbb{E}[\hat{o}_i^2] &=  \mathbb{E}_{|\psi\rangle\sim \text{Haar}} \tr(|\psi\rangle \langle\psi | O_i^2)\\
     &=\frac{1}{d} \big[\tr(O_i^2)\big],
\end{align}
and
\begin{align}
     \mathbb{E}_{|\psi\rangle\sim \text{Haar}} \text{Var} [\hat{o}_i] &= \mathbb{E}[\hat{o}_i^2] - \mathbb{E}[\hat{o}_i]^2\\
    &= \frac{1}{d} \big[\tr(O_i^2)\big] - \frac{1}{d(d+1)} \big[\tr(O_i^2)\big] \nonumber \\
    &=\frac{1}{d+1}\tr(O_i^2).
\end{align}
We get 
\begin{align}
   \mathbb{E}_{|\psi\rangle\sim \text{Haar}} N_O &\geq \mathbb{E}_{|\psi\rangle\sim \text{Haar}} \frac{2\ln (2/\delta)}{\varepsilon^2} \bigg[\mathbb{E}[\hat{o}_1^2] - \mathbb{E}[\hat{o}_1]^2 + \mathbb{E}[\hat{o}_2^2]  -\mathbb{E}[\hat{o}_2]^2\bigg]\nonumber \\
    &=\frac{2\ln (2/\delta)}{\varepsilon^2} \frac{1}{d(d+1)} [\tr(O_1^2)+\tr(O_2^2)],
\end{align}
and
\begin{align}
   \mathbb{E}_{|\psi\rangle\sim \text{Haar}} N_Z &\geq  \mathbb{E}_{|\psi\rangle\sim \text{Haar}}\frac{2\ln (2/\delta)}{\varepsilon^2} \bigg[\frac{1}{\alpha^2} - \min\big\{\mathbb{E}[\hat{o}_1]^2,\mathbb{E}[\hat{o}_2]^2\big\}\bigg]\nonumber\\
   &=\frac{2\ln (2/\delta)}{\varepsilon^2} \bigg[ \frac{1}{\alpha^2} - \frac{1}{d(d+1)}\min\big\{\tr(O_1^2),\tr(O_2^2)\big\}\bigg].
\end{align}

Thus, if we measure two observables $O_1$ and $O_2$ with projective measurements, to achieve precision $\varepsilon$ with probability $1-\delta$, we need to prepare 
\begin{align}
   \mathbb{E}_{|\psi\rangle\sim \text{Haar}} N_O =\mathcal{O} \bigg(\frac{2\ln (2/\delta)}{\varepsilon^2} \bigg(\frac{\tr(O_1^2)+\tr(O_2^2)}{d(d+1)} \bigg)\bigg).
\end{align}
If we measure two observables $O_1$ and $O_2$ with Pauli-$Z$ measurements with QNNs, 
to achieve precision $\varepsilon$ with probability $1-\delta$, we need to prepare 
\begin{align}
   \mathbb{E}_{|\psi\rangle\sim \text{Haar}} N_Z &=\mathcal{O} \bigg(\frac{2\ln (2/\delta)}{\varepsilon^2} \bigg[\frac{1}{\alpha^2} - \frac{1}{d(d+1)}\min\big\{\tr(O_1^2),\tr(O_2^2)\big\}\bigg]\bigg)\nonumber\\
   &= \mathcal{O}\bigg(\frac{d(d+1)-\alpha^2 \min \big\{\tr(O_1^2),\tr(O_2^2)\big\}}{d\alpha^2[\tr(O_1^2)+\tr(O_2^2)]}N_O\bigg).
\end{align}

\end{proof}


\section{Majorization Constraints for Identical Observables}
\label{app_majorization}
In this section we discuss the majorization limitation for the $\alpha_{\text{max}}$ if we have two identical observables $O_1 = O_2 = O$. For two $d$-dimensional vectors $\vec{a}$ and $\vec{b}$ whose components arranged in descending order, the majorization $\vec{a} \prec \vec{b}$ is defined as 
\begin{align}
    &(i)~\sum_{i=1}^{d'} a_i \leq \sum_{i=1}^{d'} b_i,~1\leq d' \leq d,\nonumber\\
    \text{and}~&(ii)~\sum_{i=1}^d a_i = \sum_{i=1}^d b_i = \text{constant}. 
\end{align}

We note that mixed-unitary channel is related to the Uhlmann theorem~\cite{nielsen2002introduction}, which states that there exists a mixed-unitary channel $\mathcal{E}$ such that $\mathcal{E}(A) = B$ if and only if $A\succ B$. And $A\succ B$ if and only if $\lambda_{A} \succ \lambda_{B}$, where $\lambda_{O}$ is the vector of eigenvalues for the operator $O$ in descending order. For two operators case $\alpha O_1 = \mathcal{E}(Z_1)$ and $\alpha O_2 = \mathcal{E}(Z_2)$, we find that three limitations need to be satisfied. The first is $Z_1 \succ \alpha O_1 $. The second is $Z_2 \succ \alpha O_2 $. And the third is $xZ_1 + yZ_2 \succ \alpha( xO_1 + yO_2) $, with $x$ and $y$ are any real numbers. The third limitation is obtained from the linearity of the quantum channel: $\alpha O_1 = \mathcal{E}(Z_1)$ and $\alpha O_2 = \mathcal{E}(Z_2)$ leads to $\alpha (xO_1 +yO_2)= \mathcal{E}(xZ_1+yZ_2)$. 

If we choose $O_1 = O_2 = O$, the third limitation above gives 
\begin{align}
    Z_1 + Z_2 \succ 2\alpha O.
\end{align}
Here we set $x=y=1$ because it gives the tightest limitation for $\alpha$.
And the limitation for eigenvalues is 
\begin{align}
    (2,0,0,-2) \succ (2\alpha o_1, 2\alpha o_2, 2\alpha o_3, 2\alpha o_4),
\end{align}
with $o_i$ is the $i$-th eigenvalue of $O$.
By the definition, it becomes
\begin{align}
    2\geq2\alpha o_1,~~2\geq2\alpha (o_1+o_2),~~2\geq2\alpha (o_1+o_2+o_3).
\end{align}
Thus
\begin{align}
    \alpha\leq \frac{1}{o_1},~~\alpha\leq \frac{1}{o_1+o_2},~~\alpha\leq \frac{1}{o_1+o_2+o_3}.
\end{align}
If we want a large $\alpha$, then $o_i$'s need to be small. However, if we want a large $\tr(O^2)$, then then $o_i$'s need to be large. 
Thus, the requirement for $\lambda_{H}<1$, which is $\alpha_{\max}^{2}\tr(O^2)> (d+1)d/(2d+1)>2$, can not be realized for general observables.

\end{document}